\newcommand{\bra}[1]{{\left\langle{#1}\right\vert}}
\newcommand{\ket}[1]{{\left\vert{#1}\right\rangle}}
\newcommand{\qw}[1][-1]{\ar @{-} [0,#1]}
\newcommand{\gate}[1]{*+<.6em>{#1} \POS ="i","i"+UR;"i"+UL **\dir{-};"i"+DL **\dir{-};"i"+DR **\dir{-};"i"+UR **\dir{-},"i" \qw}
\newcommand{\meter}{*=<1.8em,1.4em>{\xy ="j","j"-<.778em,.322em>;{"j"+<.778em,-.322em> \ellipse ur,_{}},"j"-<0em,.4em>;p+<.5em,.9em> **\dir{-},"j"+<2.2em,2.2em>*{},"j"-<2.2em,2.2em>*{} \endxy} \POS ="i","i"+UR;"i"+UL **\dir{-};"i"+DL **\dir{-};"i"+DR **\dir{-};"i"+UR **\dir{-},"i" \qw}
\newcommand{\multigate}[2]{*+<1em,.9em>{\hphantom{#2}} \POS [0,0]="i",[0,0].[#1,0]="e",!C *{#2},"e"+UR;"e"+UL **\dir{-};"e"+DL **\dir{-};"e"+DR **\dir{-};"e"+UR **\dir{-},"i" \qw}
\newcommand{\ghost}[1]{*+<1em,.9em>{\hphantom{#1}} \qw}
\newcommand{\lstick}[1]{*!R!<.5em,0em>=<0em>{#1}}
\newcommand{\ustick}[1]{*!D!<0em,-.5em>=<0em>{#1}}
\newcommand{\Qcircuit}{\xymatrix @*=<0em>}
\def\weak{\ComplexityFont{WEAK}}
\def\strong{\ComplexityFont{STR}}
\def\etal{\textit{et al}.\ }
\begin{document}
\title{Computing quopit Clifford circuit amplitudes by the sum-over-paths technique}
\author{Dax Enshan Koh}
\email{daxkoh@mit.edu}
\affiliation{Department of Mathematics, Massachusetts Institute of Technology, Cambridge, Massachusetts 02139, USA}
\author{Mark D. Penney}
\email{mpenney@mpim-bonn.mpg.de}
\affiliation{Max Planck Institute for Mathematics, Vivatsgasse 7, 53111 Bonn, Germany}
\author{Robert W. Spekkens}
\email{rspekkens@perimeterinstitute.ca}
\affiliation{Perimeter Institute for Theoretical Physics, 31 Caroline Street North, Waterloo, Ontario, Canada N2L 2Y5}

\begin{abstract}
By the Gottesman-Knill Theorem, the outcome probabilities of Clifford circuits can be computed efficiently. We present an alternative proof of this result for quopit Clifford circuits (i.e., Clifford circuits on collections of $p$-level systems, where $p$ is an odd prime) using Feynman's sum-over-paths technique, which allows the amplitudes of arbitrary quantum circuits to be expressed in terms of a weighted sum over computational paths. For a general quantum circuit, the sum over paths contains an exponential number of terms, and no efficient classical algorithm is known that can compute the sum. For quopit Clifford circuits, however, we show that the sum over paths takes a special form: it can be expressed as a product of Weil sums with quadratic polynomials, which can be computed efficiently. This provides a method for computing the outcome probabilities and amplitudes of such circuits efficiently, and is an application of the circuit-polynomial correspondence which relates quantum circuits to low-degree polynomials.

\smallskip
\noindent \textit{Keywords.} Gottesman-Knill Theorem, quopit Clifford circuits, path integrals, circuit-polynomial correspondence
\end{abstract}
\maketitle

\section{Introduction}

Computing the outcome probabilities of a quantum circuit is in general a hard problem. In complexity-theoretic terms, it belongs to a class of problems known as $\#\P$-hard, which are widely conjectured to not be efficiently solvable by a classical computer (or even a quantum computer) \cite{Arora}. Nevertheless, there are interesting subclasses of quantum circuits for which we do know efficient classical algorithms to compute the outcome probabilities.  An example is the class of (nonadaptive) Clifford circuits, which has been studied extensively in quantum information theory, for example, in quantum error correction \cite{GottesmanThesis} and in measurement-based quantum computation \cite{Raussendorf, RaussendorfBriegel}. These circuits are rich enough to exhibit many of the `nonclassical' features of quantum mechanics like entanglement and quantum teleportation, but yet are not rich enough to preclude efficient simulation by a classical computer \cite{AaronsonGottesman}. The latter fact is the content of the Gottesman-Knill Theorem \cite{gottesman1998heisenberg}, and one of its implications is the existence of an efficient classical algorithm to compute the outcome probabilities of a Clifford circuit. 

The original proof of the Gottesman-Knill Theorem makes use of the \textit{stabilizer} formulation of quantum mechanics, in which the state of the system at each time step is represented not by the amplitudes of the state vector, but by a set of Pauli operators which stabilize it \cite{GottesmanThesis}. Using this approach, the problem of computing the outcome probabilities of Clifford circuits can be reduced to computing inner products between stabilizer states \cite{AaronsonGottesman}. The latter can be done efficiently using the stabilizer formalism, and hence the outcome probabilities can be computed efficiently.

Besides the stabilizer formalism, other techniques have been used to compute the outcome probabilities of Clifford circuits efficiently (for some examples, see \cite{Dehaene, VandenNest,  Jozsa, nest2012efficient}). In this paper, we present a different method from these that is explicitly based upon Feynman's sum-over-paths technique \cite{FeynmanHibbs, Dawson, Penney}. 
We restrict our attention to Clifford circuits acting on collections of quopits, i.e.,\ $p$-level systems where $p$ is an odd prime \cite{Emerson} (a few remarks about extending our results to qubit systems will be made in Section \ref{sec:conclusion}). In this approach, the amplitudes of quantum circuits are expressed in terms of a weighted sum over computational paths. 

For general quantum circuits, such a sum over paths involves an exponential number of terms, and no efficient algorithm exists to compute this sum, unless $\#\P$-complete problems can also be solved efficiently. However, building on the work of Dawson \textit{et al}.\@ \cite{Dawson}, we show that for quopit Clifford circuits, the sum over paths takes a special form: it can be expressed as a product of Weil sums \cite{Weil} with quadratic polynomials. The problem of evaluating Weil sums explicitly is in general difficult, but for Weil sums with quadratic polynomials, the sum can be computed efficiently. This gives an efficient algorithm to compute not just the outcome probabilities but also the amplitudes of quopit Clifford circuits when all $n$ quopit registers are measured. In other words, such circuits admit of an efficient $\strong(n)$ simulation~\cite{koh2015further} (see also Section \ref{sec:notions} for a discussion of various notions of simulation).

The sum-over-paths technique has previously been used to answer computational complexity questions about the power of quantum computation. For example, by considering quantum circuits comprising only gates from the universal gate set of Toffoli and Hadamard gates, Dawson \etal provide a simple proof of the complexity-theoretic result that $\BQP \subseteq \PP$ (first proved by \cite{Adleman}), one of the tightest `natural' upper bounds for $\BQP$ \cite{Dawson}. Dawson \etal then ask what other universal gate sets are amenable to the sum-over-paths approach. An extension of this question, that we address in this paper, is to ask not just about universal gate sets, but also about gate sets corresponding to restricted models of quantum computation. 

Another example is the class of linear algebraic quantum circuits (which are closely related to Clifford circuits) studied by Bacon \etal \cite{AlgCircuits}, who noted that the sum-over-paths technique introduced by Dawson \etal implied that the computation of outcome probabilities in such circuits (assuming all registers are measured) can be reduced to the computation of Weil sums for quadratic polynomials, implying efficient classical simulation of such circuits (specifically, efficient $\mathsf{STR}(n)$ simulation). When specialized to the case of quopits, however, the group of unitaries implementable in a linear algebraic quantum circuit is a proper subgroup of those implementable by a quopit Clifford circuit because the generating gate set does not include the phase gate ($R$ in \eq{def:cliffordgates}, which corresponds to a phase space squeezing operation).  In this respect, our result generalizes theirs.  Furthermore, we here provide an explicit expression for not just the outcome probabilities, as Bacon \etal do, but the amplitudes as well.

The sum-over-paths technique makes explicit a correspondence between quantum circuits and low-degree polynomials, known as the \textit{circuit-polynomial correspondence} \cite{Montanaro}. This correspondence can be exploited in two different directions. In the first direction, using quantum circuit concepts, it enables one to prove classical results about polynomials. For example, the Gottesman-Knill Theorem, which is a theorem about quantum circuits, can be used to provide an efficient algorithm to compute the gap of degree-2 polynomials over $\mathbb F_2$ \cite{Montanaro}. In the second direction, known classical results about polynomials can be used to provide algorithms for simulating classes of quantum circuits. Our result, in which we use classical results about degree-2 polynomials to simulate quopit Clifford circuits, provides an example of the second direction. Note that while the polynomials in \cite{Dawson} and \cite{Montanaro} are over $\mathbb F_2$, our results about quopit systems involve polynomials over the field $\mathbb F_p$ where $p$ is an odd prime.


The rest of the paper is structured as follows. In Section \ref{sec:prelim}, we introduce the relevant definitions and notations and describe the problem of interest. In Section \ref{sec:constructingPathSumExpressions}, we review the sum-over-paths technique and show how to construct sum-over-paths expressions for quopit Clifford circuits. In Section \ref{sec:evaluatingsums}, we show how the sum-over-paths expression can be computed classically in polynomial time. In Section \ref{sec:notions}, we discuss different notions of classical simulation and their relation to the Gottesman-Knill Theorem. In Section \ref{sec:balancedness}, we show how our results can be used to show that unitary operations implemented by quopit Clifford circuits are necessarily balanced. Finally, in Section \ref{sec:conclusion}, we conclude by discussing other gate sets, including qubit Clifford gates.

\section{Preliminary definitions and notation}
\label{sec:prelim}

In this paper, $p$ will always denote an odd prime. We shall work over the finite field $\mathbb{F}_p$ of characteristic $p$, which is the set of integers modulo $p$. The set of $n\times n$ matrices over $\mathbb F_p$ is denoted by $M_n(\mathbb F_p)$, and the group of invertible $n\times n$ matrices over $\mathbb F_p$ is denoted by $\mathrm{GL}_n(\mathbb F_p)$. 

We confine our attention to quopit systems, i.e.,\ $p$-level quantum systems where $p$ is an odd prime. A quopit Clifford circuit acting on quopit systems is defined to be any circuit consisting of only the following gates, called \textit{quopit Clifford gates}: the Fourier gate $F$, the phase gate $R$ and the sum gate $\Sigma$, which are the $p$-level generalizations, respectively, of the Hadamard, phase and CNOT gates of qubit Clifford circuits defined in \eq{qubitcircuit}. They are defined as follows:
\begin{eqnarray}
\label{def:cliffordgates}
F &\equiv& \frac 1{\sqrt{p}} \sum_{s,t \in \mathbb F_p} \chi(st) \ket s \bra t,  \nonumber \\
R &\equiv& \sum_{t\in\mathbb F_p} \chi(t(t-1)2^{-1}) \ket t \bra t, \nonumber \\
\Sigma &\equiv& \sum_{s,t\in \mathbb F_p} \ket{s,s+t}\bra{s,t},
\end{eqnarray}
where $\chi(a) \equiv \exp(2\pi i a/p)$, and $2^{-1} = (p+1)/2$ is the inverse of 2 modulo $p$. For the sum gate, we write $\Sigma_{ab}$ to indicate that $a$ and $b$ are the control and target registers respectively, i.e. $\Sigma_{ab} = \sum_{s,t\in \mathbb F_p} \ket{s}\bra{s}_a \otimes \ket{s+t}\bra{t}_b$.

For a given circuit, let $n$ denote the number of registers (i.e.\ number of quopits), and $N$ denote the number of gates. We make the following additional assumptions about the circuit (for an example, see the circuit diagram in Figure \ref{fig:labelingscheme1}): 
\begin{itemize}
\item The inputs to the circuit are computational basis states $\ket a$, where $a\in \mathbb F_p^n$. 
\item Measurements are performed only at the end of the circuit, i.e., there are no intermediate measurements, and \textit{all} quopits are measured at the end of the circuit. Also, measurements are performed in the computational basis. Hence, the possible measurement outcomes lie in the set $\mathbb F_p^n$. A measurement outcome of $b \in \mathbb F_p^n$ is associated with the computational basis vector $\ket b$.
\item There are no extraneous quopits, i.e. every quopit is acted on by at least one gate, so that $n=O(N)$.
\end{itemize}

The problem we are interested in, which we call $\mathcal P$, is the following: given a quopit Clifford circuit acting on the input state $\ket a$, where $a \in \mathbb F_p^n$, compute the probability amplitude associated with the outcome $b \in \mathbb F_p^n$. Formally, $\mathcal P$ may be stated as follows: \newline\newline
\textit{
Given a description of a quopit Clifford circuit that implements the unitary $U$, as well as strings $a, b \in \mathbb F_p^n$, compute $\langle b|U|a\rangle$.} 
\newline\newline
Note that a description of a quopit Clifford circuit $C$ is a specification of the gates in $C$ as well as the registers on which they act. 

If $C$ were allowed to be a general quantum circuit with gates chosen from some universal discrete gate set, then the problem $\mathcal P$ would be $\#\P$-hard. But for the quopit Clifford circuits $C$ that we consider, $\mathcal P$ can be solved in polynomial-time. We now describe a proof of this result that is based on the sum-over-paths formulation of quopit Clifford circuits.

\section{Constructing sum-over-paths expressions for quopit Clifford circuits}
\label{sec:constructingPathSumExpressions}

In this section, we review the sum-over-paths technique applied to quopit Clifford circuits that was introduced in Section III of Ref.~\cite{Penney}. Without loss of generality, we assume that each register of the Clifford circuit terminates in a Fourier gate just before it is measured (we shall refer to circuits with this property as \textit{standard-form quopit Clifford circuits}). If this were not the case, for each register that does not terminate in a Fourier gate, we could pad the circuit by inserting 4 Fourier gates before the measurement is performed, since $F^4 = \mathbb I$.  The Fourier gates that appear just before a measurement shall be called \textit{terminal} Fourier gates. All other Fourier gates will be called \textit{non-terminal}.

For a quopit Clifford circuit $C$ with input labeled by $a=a_1 \ldots a_n \in \mathbb F_p^n$ and measurement outcome labeled by $b=b_1 \ldots b_n \in \mathbb F_p^n$, we shall label wires of $C$ at every time step to create a \textit{labeled circuit} as follows (See Figures \ref{fig:labelingscheme1} and \ref{fig:labelingscheme2} for an example):
\def\lab#1{\ustick{{}_{#1}}}
\def\llab#1{ \qw & \ustick{{}_{#1}} \qw &\qw}
\def\lllab#1{\qw & \qw & \ustick{{}_{#1}} \qw &\qw & \qw}

\begin{figure}
\begin{align}
\Qcircuit @C=1em @R=1em {
& &\lstick{\ket{a_1}} & \gate{R}  & \multigate{1}{ \Sigma_{12}  } & \gate {F} & \gate{F} & \meter  & \!\!\! {}^{b_1} \\ 
& &\lstick{\ket{a_2}} & \gate{F}  & \ghost{ \Sigma_{12}  } & \multigate{1}{ \Sigma_{23}  } & \gate{F} & \meter & \!\!\! {}^{b_2} \\ 
& &\lstick{\ket{a_3}} & \qw  & \gate{F} & \ghost{ \Sigma_{23}  } & \gate{F} & \meter & \!\!\! {}^{b_3} \\ 
} \nonumber
\end{align}
\caption{\small Example of a quopit Clifford circuit. As explained in the text, we can assume without loss of generality that each register ends in a Fourier gate.}
\label{fig:labelingscheme1}
\begin{align}
\Qcircuit @C=1em @R=1em {
& &\lab{a_1} \qw& \gate{R} &\lab{a_1} \qw & \multigate{1}{ \Sigma_{12}  } & \llab{a_1}  & \gate {F} & \llab{x_3} & \gate{F} &\lab{b_1} \qw   \\ 
& &\lab{a_2} \qw & \gate{F}  &\lab{x_1} \qw & \ghost{ \Sigma_{12}  } & \llab{a_1 + x_1}  & \multigate{1}{ \Sigma_{23}  } &\llab{a_1 + x_1}  & \gate{F} &\lab{b_2} \qw  \\ 
& &\lab{a_3} \qw & \qw &\lab{a_3} \qw & \gate{F} & \llab{x_2} & \ghost{ \Sigma_{23}  } & \llab{a_1 + x_1 + x_2} & \gate{F} & \lab{b_3} \qw\\ 
} \nonumber
\end{align}
\caption{\small Labeled circuit corresponding to the circuit in Figure \ref{fig:labelingscheme1}. The phase polynomial is read off to be $S(x_1,x_2,x_3) = a_2 x_1 + a_3 x_2 + a_1 x_3 + x_3 b_1 + b_2(a_1+x_1)+b_3(a_1 + x_1 + x_2) + 2^{-1} a_1 (a_1-1).$}
\label{fig:labelingscheme2}
\end{figure}

\begin{enumerate}
\item Label the input wires by $a_1,\ldots, a_n$.
\item Going from left to right of the circuit diagram for $C$, label the wires at each subsequent time step as follows: 
\begin{enumerate}
\item \label{Rgate} For each phase gate $R$ and identity gate $\mathbb I$ (i.e.\ when we have a bare wire), if the label at the input is $s$, then we label the output by $s$. 
\item \label{sumGate} For each sum gate $\Sigma$, if the labels at the inputs are $(s,t)$, then label the outputs by $(s, s+t)$. Here the first element in the pair is the control register, and the second element in the pair is the target register.
\item For the $l$th non-terminal Fourier gate $F$, we introduce an auxiliary variable $x_l$, and regardless of the input to the Fourier gate, we label the output by $x_l$.
\end{enumerate}
\item Label the output wires by $b_1, \ldots, b_n$.
\end{enumerate}

\def\IN#1{\textrm{in}(#1)}
\def\OUT#1{\textrm{out}(#1)}

We shall associate each quopit Clifford circuit with a polynomial over $\mathbb F_p$, which is called the \textit{phase polynomial} \cite{gerdt2006algorithm}. The variables in the phase polynomial are the auxiliary variables $x = (x_1, \ldots, x_\alpha)$, where $\alpha$ is the number of non-terminal Fourier gates in the circuit.  For each gate $G$ in the circuit, let $\IN{G}$ and $\OUT{G}$ be the input and output labels of that gate in the labeled circuit. The phase polynomial associated with a quopit Clifford circuit is the polynomial $S$ over $\mathbb F_p$ defined by (see Figure \ref{fig:labelingscheme2} for an example):
\begin{eqnarray}
\label{def:phase}
S(x) &=& \sum_{ \textrm{Fourier gates} \ F} \IN{F} \OUT{F} \nonumber\\ &&+ \sum_{ \textrm{phase gates} \ R} 2^{-1} \, \IN{R}(\IN{R}-1).  
\end{eqnarray}

The theorem relating the circuit amplitudes and the phase polynomial, which appears as Theorem 3 in \cite{Penney}, is the following: 
\begin{theorem}
\label{mainTheorem}
Let $C$ be a standard-form quopit Clifford circuit on $n$ registers that implements the Clifford operation $U$. Let $\alpha$ be the number of non-terminal Fourier gates and let $S(x)$ be the phase polynomial associated with $C$. Then,
\begin{equation} 
\label{mainFormula} 
\langle b|U|a \rangle  = \frac{1}{p^{(n+\alpha)/2}} \sum_{x \in \mathbb F_p^\alpha} \chi(S(x)).  
\end{equation} 
\end{theorem}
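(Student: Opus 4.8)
The plan is to proceed by induction on the number of gates $N$ in the circuit, tracking how the labeled-circuit construction and the phase polynomial evolve as we append one gate at a time. The key structural fact is that the state of the circuit at any intermediate time step, when the wires carry labels $(\ell_1, \dots, \ell_n)$ that are affine functions of the auxiliary variables $x = (x_1, \dots, x_k)$ introduced so far, can be written as a ``path-sum state'' of the form
\begin{equation}
\frac{1}{p^{k/2}} \sum_{x \in \mathbb F_p^k} \chi\big(S_{\mathrm{partial}}(x)\big)\, \ket{\ell_1(x), \dots, \ell_n(x)}, \nonumber
\end{equation}
where $S_{\mathrm{partial}}$ is the phase polynomial accumulated from the gates processed so far. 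The base case is the input state $\ket{a}$, for which $k = 0$, the labels are the constants $a_i$, and $S_{\mathrm{partial}} = 0$. First I would set up this invariant carefully, being explicit about the role of the prefactor $p^{k/2}$ with $k$ the number of non-terminal Fourier gates inserted so far.

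Next I would verify the inductive step for each of the three gate types. For a phase gate $R$ acting on register $j$ with incoming label $s = \ell_j(x)$: applying $R$ multiplies the amplitude of $\ket{s}$ by $\chi(2^{-1} s(s-1))$, the label is unchanged, and no new variable appears—so $S_{\mathrm{partial}}$ gains exactly the term $2^{-1}\,\IN{R}(\IN{R}-1)$ prescribed by \eq{def:phase}. For a sum gate $\Sigma_{ab}$, the map $\ket{s,t}\mapsto\ket{s,s+t}$ is a permutation of basis states; it introduces no phase and no new variable, and the relabeling rule \ref{sumGate} exactly matches. The one substantive computation is the non-terminal Fourier gate on register $j$: writing $F = \frac{1}{\sqrt p}\sum_{s,t}\chi(st)\ket s\bra t$, applying it to the component with label $t = \ell_j(x)$ produces $\frac{1}{\sqrt p}\sum_{x_{k+1}\in\mathbb F_p}\chi(x_{k+1}\,\ell_j(x))\,\ket{\dots,x_{k+1},\dots}$; this introduces the fresh variable $x_{k+1}$, supplies the new prefactor $1/\sqrt p$, adds the term $\IN{F}\OUT{F} = \ell_j(x)\,x_{k+1}$ to the phase polynomial, and resets the $j$th label to $x_{k+1}$—precisely rules \ref{Rgate}--3 and the Fourier summand in \eq{def:phase}. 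One must check that each new label and hence $S_{\mathrm{partial}}$ stays a polynomial in the $x_l$'s (in fact of degree $\le 2$, though degree is not needed here).

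Finally, for the terminal Fourier gates: after processing all non-terminal gates we have $k = \alpha$ auxiliary variables, accumulated phase $S_0(x)$, and some affine output labels; each register then passes through exactly one terminal Fourier gate and is measured against $\bra{b_i}$. Taking the inner product $\bra{b_1,\dots,b_n}$ with $F^{\otimes n}$ applied to the path-sum state contributes, for register $i$ with pre-terminal label $y_i(x)$, a factor $\frac{1}{\sqrt p}\chi(y_i(x)\,b_i)$ (since $\bra{b_i}F\ket{y_i} = \frac{1}{\sqrt p}\chi(b_i y_i)$); summing these exponents over $i$ adds $\sum_i \IN{F}\OUT{F}$ for the terminal Fourier gates, giving the full $S(x)$ of \eq{def:phase}, and the $n$ factors of $1/\sqrt p$ combine the prefactor to $p^{-(n+\alpha)/2}$, yielding exactly \eq{mainFormula}. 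I expect the main obstacle to be purely bookkeeping rather than conceptual: one must be scrupulous that the ``current label'' on each wire is consistently defined through sum gates (which entangle the labels of two registers) and that the count of $1/\sqrt p$ factors lines up with $n + \alpha$ at the end—the non-terminal Fourier gates each give one factor ($\alpha$ total) and the terminal ones give one per register ($n$ total). A secondary point worth stating explicitly is that the labeling procedure is well-defined, i.e.\ the order in which gates are processed (any linear extension of the circuit's partial order) does not affect $S(x)$.
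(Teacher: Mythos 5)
Your proposal is correct. The paper itself does not reprove this statement, citing it as Theorem~3 of Ref.~\cite{Penney}; but your gate-by-gate induction on the path-sum state invariant is precisely the standard argument that underlies the sum-over-paths formalism, and the bookkeeping (tracking the $p^{-k/2}$ prefactor per non-terminal Fourier gate, the affine labels, and then absorbing the $n$ terminal Fourier gates to pick up the $\chi(b_i\,y_i(x))$ factors and the remaining $p^{-n/2}$) reconstructs \eq{mainFormula} faithfully.
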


\section{Evaluating the sum over paths}
\label{sec:evaluatingsums}

Using Theorem \ref{mainTheorem}, the problem $\mathcal P$ is reduced to evaluating the sum in \eq{mainFormula}. In this section, we describe how this sum may be evaluated.

First, we note that $S(x)$ is a degree-2 polynomial in the variables $x= (x_1,\ldots, x_\alpha)$ as well as the variables $a_1,\ldots, a_n, b_1,\ldots, b_n$. This is due to the fact that $S(x)$ is a sum of terms which are at most quadratic, since the input and output labels of each gate are linear in the variables $x_i, a_i$ and $b_i$. Hence, we can write $S(x)$ as 
\begin{equation}
\label{expforAction}
S(x) = x^T \Theta x + \eta^T x + \zeta =  \sum_{i,j=1}^\alpha  \Theta_{ij} x_i x_j + \sum_{i=1}^\alpha \eta_i x_i + \zeta,
\end{equation}
where $\Theta \in M_\alpha(\mathbb F_p)$ can be chosen to be symmetric, $\eta \in \mathbb F_p^\alpha$ and $\zeta \in \mathbb F_p$. Note that while $\eta$ and $\zeta$ are dependent on $a$ and $b$, $\Theta$ is independent of $(a,b)$, because otherwise $S(x)$ as a polynomial in 
 $x_i, a_i$ and $b_i$ would have a degree that exceeds 2.

Substituting \eq{expforAction} into \eq{mainFormula} gives
\begin{equation} 
\label{mainFormula1} 
\langle b|U|a \rangle  = \frac{\chi(\zeta)}{p^{(n+\alpha)/2}} \sum_{x \in \mathbb F_p^\alpha} \chi(x^T \Theta x + \eta^T x ) .
\end{equation}
The above sum can be evaluated using the following two steps. 
\subsection{Step 1: Diagonalizing \texorpdfstring{$\Theta$}{Theta}}
In the first step, we diagonalize the matrix $\Theta$, by making use of the following theorem:
\begin{theorem}
\label{thm:diagonalization}
There is a polynomial-time algorithm $T$ that when given a symmetric matrix $\Theta \in M_{\alpha}(\mathbb F_p)$ outputs an invertible matrix $L \in \mathrm{GL}_\alpha(\mathbb F_p)$ such that $L^T \Theta L$ is diagonal.
\end{theorem}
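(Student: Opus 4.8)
The plan is to use classical symmetric Gaussian elimination, i.e.\ Lagrange's algorithm for diagonalizing quadratic forms, which works over any field of characteristic different from $2$ and hence over $\mathbb F_p$ with $p$ odd. The algorithm maintains a symmetric matrix (initially $\Theta$) together with the congruence applied so far, and recurses on the size of the matrix; at the end it returns the accumulated change-of-basis matrix $L$, which is invertible because it is a product of invertible elementary matrices.

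First I would dispose of the degenerate case: if the current symmetric matrix is zero it is already diagonal, so nothing is appended. Otherwise I would look for a nonzero diagonal entry. If some $\Theta_{kk}\neq 0$, a congruence by a permutation matrix swapping the basis vectors $e_1$ and $e_k$ moves this pivot to position $(1,1)$. If instead every diagonal entry vanishes while $\Theta\neq 0$, choose $i<j$ with $\Theta_{ij}\neq 0$ and apply the congruence $e_i\mapsto e_i+e_j$ (fixing the other basis vectors); the new $(i,i)$ entry becomes $\Theta_{ii}+2\Theta_{ij}+\Theta_{jj}=2\Theta_{ij}$, which is nonzero precisely because $2$ is invertible in $\mathbb F_p$, reducing us to the previous case. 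This is the one place where oddness of $p$ is genuinely needed.

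With a nonzero pivot $\theta:=\Theta_{11}$ in place, for each $i=2,\ldots,\alpha$ I would apply the elementary congruence $e_i\mapsto e_i-(\Theta_{1i}/\theta)\,e_1$. Carried out symmetrically on the corresponding rows and columns, each such step clears the $(1,i)$ and $(i,1)$ entries without disturbing the pivot or the already-cleared positions (that operation only modifies row $i$ and column $i$, never row or column $1$), so the outcome is block-diagonal, $[\theta]\oplus\Theta'$ with $\Theta'\in M_{\alpha-1}(\mathbb F_p)$ symmetric (in fact $\Theta'_{ij}=\Theta_{ij}-\Theta_{1i}\Theta_{1j}/\theta$, the Schur complement). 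Recursing on $\Theta'$ and composing all the elementary congruences in order yields $L$ with $L^T\Theta L$ diagonal.

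The only thing that requires care is the complexity bound, and I do not expect a real obstacle here. There are $O(\alpha)$ recursion levels; each level performs at most one permutation, at most one ``$e_i\mapsto e_i+e_j$'' correction, and $\alpha$ row/column clearing steps, each clearing step touching $O(\alpha^2)$ entries of the working matrix and of the accumulator $L$ and costing one division and $O(\alpha^2)$ multiply--add operations in $\mathbb F_p$. This totals $O(\alpha^3)$ arithmetic operations over $\mathbb F_p$, each implementable with a number of bit operations polylogarithmic in $p$, so the algorithm runs in time polynomial in the input length, as claimed.
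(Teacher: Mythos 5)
Your proof is correct and takes essentially the same approach as the paper: both are implementations of Lagrange's symmetric diagonalization of quadratic forms over a field of odd characteristic (Propositions 6.20--6.21 of Lidl--Niederreiter), with the same case split on whether a nonzero diagonal pivot exists and the same $e_i\mapsto e_i+e_j$ trick (exploiting $2\neq 0$) when it does not. The paper packages the move-pivot-then-clear steps into a single matrix $P=CD$ per recursion level rather than a sequence of elementary congruences, but the underlying algorithm and complexity analysis are the same.
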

\begin{proof}
The proof is essentialy an algorithmic implementation of the standard proof that any quadratic form over a field that is not of characteristic 2 is diagonalizable (see Propositions 6.20 and 6.21 of \cite{Lidl}). We present a polynomial-time algorithm in Appendix \ref{sec:proofOfDiagonalization}.
\end{proof}

By making use of Theorem \ref{thm:diagonalization}, and the change of variables $\mu = L^T \eta$ and $x = Ly$, we can rewrite $x^T \Theta x + \eta^T x = y^T \Lambda y + \mu^T y$, where $\Lambda = L^T \Theta L$ is a diagonal matrix. By this change of variables, the sum in \eq{mainFormula1} becomes
\begin{align}
\label{singlevariablesum}
\sum_{x \in \mathbb F_p^\alpha} \chi(x^T \Theta x + \eta^T x) 
&=  \sum_{y_1,\ldots, y_\alpha  \in \mathbb F_p} \chi\left( \sum_{i=1}^\alpha \lambda_i y_i^2 + \mu_i y_i\right) \nonumber\\ 
&=  \prod_{i=1}^\alpha \sum_{y_i \in \mathbb F_p} \chi\left(\lambda_i y_i^2 + \mu_i y_i\right) ,
\end{align}
where the $\lambda_i$ are the diagonal entries of $\Lambda$, and the $\mu_i$ are the components of $\mu$. We see from \eq{singlevariablesum} that one needs only to compute the (much simpler) sums over a single variable. Such sums are called \textit{Weil sums} \cite{Lidl}, and we will show in the next step how to compute them.

\subsection{Step 2: Using the exponential sum formula}

The second step makes use of the following theorem about exponential sums (see Theorem 5.33 of \cite{Lidl}):
\begin{theorem}
\label{thm:expsum}
The sum $$\sum_{y \in \mathbb F_p} \chi\left(\lambda y^2 + \mu y\right)$$ can be explicitly evaluated as follows: 
\begin{enumerate}
\item If $\lambda = \mu = 0$, then it equals $p$.
\item If $\lambda =0$ and $\mu \neq 0$, then it equals $0$.
\item If $\lambda \neq 0$, then it equals
\begin{equation}
i^{\varepsilon(p)} \chi\left(-4^{-1} \lambda^{-1}\mu^2\right) \left(\frac{\lambda}{p} \right)  \sqrt{p},
\end{equation}
where $\left(\frac{\lambda}{p} \right)$ is the Legendre symbol and $\varepsilon(p) =0$ if $p$ is congruent to $1$ mod $4$ and $\varepsilon(p) =1$ otherwise. The inverses are modular multiplicative inverses modulo $p$.
\end{enumerate}
\end{theorem}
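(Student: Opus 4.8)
The plan is to treat the three cases separately; the first two are immediate, and only the third carries real content. In case 1 every summand is $\chi(0)=1$, so the sum is $p$. In case 2, since $\mu\neq 0$ the map $y\mapsto\mu y$ is a bijection of $\mathbb{F}_p$, so the sum equals $\sum_{z\in\mathbb{F}_p}\chi(z)$, which is $0$ because $\chi$ is a nontrivial additive character of $\mathbb{F}_p$ (equivalently, it is a sum over all $p$-th roots of unity, which vanishes).

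For case 3 ($\lambda\neq 0$) I would first complete the square. Writing $\lambda y^2+\mu y=\lambda\big(y+2^{-1}\lambda^{-1}\mu\big)^2-4^{-1}\lambda^{-1}\mu^2$ and then translating $y$ by $-2^{-1}\lambda^{-1}\mu$ (a bijection of $\mathbb{F}_p$) reduces the sum to $\chi\big(-4^{-1}\lambda^{-1}\mu^2\big)\sum_{y\in\mathbb{F}_p}\chi(\lambda y^2)$, so the problem becomes the evaluation of the pure quadratic Gauss sum $G(\lambda)\equiv\sum_{y\in\mathbb{F}_p}\chi(\lambda y^2)$. Next I would use the fact that each nonzero $z$ has exactly $1+\left(\frac{z}{p}\right)$ square roots in $\mathbb{F}_p$, so that $G(\lambda)=\sum_{z\in\mathbb{F}_p}\big(1+\left(\frac{z}{p}\right)\big)\chi(\lambda z)=\sum_{z\in\mathbb{F}_p}\left(\frac{z}{p}\right)\chi(\lambda z)$, the remaining all-ones contribution vanishing exactly as in case 2. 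Substituting $z\mapsto\lambda^{-1}z$ and using that the Legendre symbol is multiplicative and $\pm 1$-valued gives $G(\lambda)=\left(\frac{\lambda}{p}\right)G(1)$, so everything reduces to the single number $G(1)=\sum_{y\in\mathbb{F}_p}\chi(y^2)$.

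It then remains to establish $G(1)=i^{\varepsilon(p)}\sqrt p$, and this is where I expect the only genuine obstacle. Its absolute value is easy: expanding $|G(1)|^2$ as a double sum, reindexing multiplicatively, and applying the orthogonality of $\chi$ (the resulting inner sums are $p-1$ or $-1$) yields $|G(1)|^2=p$. A parallel manipulation of $G(1)^2$ gives $G(1)^2=\left(\frac{-1}{p}\right)p=(-1)^{(p-1)/2}p$, which already confines $G(1)$ to $\pm\sqrt p$ when $p\equiv 1\pmod 4$ and to $\pm i\sqrt p$ when $p\equiv 3\pmod 4$. Fixing the sign as $+$ in both cases is precisely Gauss's classical theorem on quadratic Gauss sums, and here I would simply invoke it (see \cite{Lidl}), or, if a self-contained argument is wanted, reproduce one of the standard proofs---Dirichlet's Fourier-analytic evaluation, or Schur's computation of the eigenvalue multiplicities of the discrete Fourier transform matrix, whose trace is $G(1)$. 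Combining the completed square, the identity $G(\lambda)=\left(\frac{\lambda}{p}\right)G(1)$, and $G(1)=i^{\varepsilon(p)}\sqrt p$ then yields the claimed closed form.
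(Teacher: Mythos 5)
The paper does not actually prove this theorem; it simply cites Theorem 5.33 of Lidl and Niederreiter \cite{Lidl}. Your proposal supplies the missing argument, and it is correct: completing the square reduces the general Weil sum to the quadratic Gauss sum $G(\lambda)=\sum_{y}\chi(\lambda y^2)$; counting square roots via $1+\left(\frac{z}{p}\right)$ converts $G(\lambda)$ into the Legendre-symbol Gauss sum; multiplicativity of $\left(\frac{\cdot}{p}\right)$ gives $G(\lambda)=\left(\frac{\lambda}{p}\right)G(1)$; and the double-sum manipulations yield $|G(1)|^2=p$ and $G(1)^2=\left(\frac{-1}{p}\right)p$. This is essentially the derivation that \cite{Lidl} carries out behind the citation. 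You also correctly isolate the one nontrivial step: fixing the sign of $G(1)$ — $+\sqrt p$ when $p\equiv 1\pmod 4$ and $+i\sqrt p$ when $p\equiv 3\pmod 4$ — which is Gauss's classical theorem and genuinely does not follow from the two magnitude computations alone. Invoking it is reasonable here (and is effectively what the paper does by citing \cite{Lidl}); if a self-contained treatment is desired, Schur's computation of the eigenvalue multiplicities of the DFT matrix, whose trace is $G(1)$, is the cleanest to inline. Net effect: your proof is more explicit and more informative than the paper's, which is just a pointer to the literature.
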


If we partition the indices $i \in \{1,\ldots,\alpha\}$ into the following sets
\begin{eqnarray}
X &=& \{i \in \{1,\ldots,\alpha\}| \lambda_i \neq 0\},  \nonumber\\
Y &=& \{i \in \{1,\ldots,\alpha\}| \lambda_i = 0, \mu_i = 0\}, \nonumber\\ 
Z &=& \{i \in \{1,\ldots,\alpha\}| \lambda_i = 0, \mu_i \neq 0\},  \nonumber
\end{eqnarray}
then by using the exponential sum formula in Theorem \ref{thm:expsum} to evaluate the sums in \eq{singlevariablesum}, we obtain 
\begin{eqnarray} 
\label{finalForm}
\langle b|U|a \rangle  &=& 
\frac{\chi(\zeta)}{p^{(n+\alpha)/2}} \left( \prod_{i\in X} i^{\varepsilon(p)} \chi(-4^{-1} \lambda_i^{-1} \mu_i^2) \left(\frac{\lambda_i}{p}\right) \sqrt{p}\right) \nonumber\\
&&\times \left( \prod_{j\in Y} p\right) \left( \prod_{k\in Z} 0\right) \nonumber\\
&=& p^{-(n+r -\alpha)/2} \delta_{0,|Z|} i^{r \varepsilon(p)} \left( \frac{ \prod_{i\in X} \lambda_i}{p} \right) \nonumber\\ &&\times \  \chi\left(\zeta - 4^{-1} \sum_{i\in X} \lambda_i^{-1} \mu_i^2\right),
\end{eqnarray}
where $r= |X|$ is the rank of $\Theta$, i.e.\ the number of nonzero diagonal entries in $\Lambda$. Note that we used the multiplicative property of the Legendre symbol: $\left( \frac{ \prod_{i\in X} \lambda_i}{p} \right) = \prod_{i\in X} \left( \frac{\lambda_i}{p} \right)$, and the fact that when $|Z|=0$, $|X|+|Y|=\alpha$. Here, $\delta_{x,y}$ is the Kronecker delta.

Now, the Legendre symbol $\left(\frac{a}{p}\right)$ takes values in the set $\{-1,0,1\}$ and vanishes only when $a \equiv 0 \mod p$. Since $\lambda_i \neq 0$ for $i\in X$, by definition, we get the following simple expression for the outcome probabilities:
\begin{equation} \label{expressionForProbabilities}
|\langle b|U|a \rangle|^2 = \frac{1}{p^{n+r-\alpha}}  \delta_{0,|Z|}.
\end{equation}

\subsection{Running time}

We now analyze the running time of the above procedure. The evaluation of the matrix element $\bra b U \ket a$ involved four main steps: First, given a decomposition of $U$ in terms of quopit Clifford gates, we employed the labeling procedure described in Section \ref{sec:constructingPathSumExpressions} to label the Clifford circuit. Second, from the labeled circuit, we computed the phase polynomial $S(x)$ defined by \eq{def:phase}. Third, we used Theorem \ref{thm:diagonalization} to diagonalize $S(x)$, and fourth, we used the exponential sum formula in Theorem \ref{thm:expsum} to calculate the matrix element in \eq{finalForm}. 

Steps 1, 2, and 4 take time that is linear in the size of the circuit. Step 3 involves matrix diagonalization 
which can be carried out in polynomial time, as we show
in Appendix \ref{sec:proofOfDiagonalization}. Hence, the algorithm that we give here to compute $\bra b U \ket a$ runs in polynomial time. 

\section{Notions of classical simulation and the Gottesman-Knill Theorem}
\label{sec:notions}

The Gottesman-Knill Theorem is a result stating that Clifford circuits can be efficiently simulated by a classical computer. Since its first appearance in \cite{gottesman1998heisenberg}, several other variants and generalizations of the theorem have been found, and today the term is often used, loosely, to refer to any one of a collection of results that are variants of the original theorem \cite{nielsen2010quantum,aaronson2004improved,van2010classical,Jozsa,koh2015further}. These variants vary according to the ingredients of the Clifford circuit and what it means to simulate it. The goal of this section is to clarify some of the differences between these variants and to discuss the relationship between the Gottesman-Knill Theorem and our results from Section \ref{sec:evaluatingsums} that were obtained via the sum-over-paths approach.

Two common notions of classical simulation of quantum computation are \textit{weak} and \textit{strong} simulation \cite{terhal2004adptive,VandenNest}. Let $T$ be a description of a quopit quantum circuit with $n$ registers, which we index by the integers in $[n] = \{1,\ldots,n\}$. Let $I = \{i_1,\ldots, i_{|I|}\} \subseteq [n]$ be a subset of indices, and let $y_{|I|}\in \mathbb F_p^{|I|}$ be a $|I|$-tuple of elements from $\mathbb F_p$. Define $p_T^{I} (y_{|I|})$ to be the probability that the outcomes $y_{|I|}$ are observed when the registers whose indices are in the set $I$ of the quantum circuit $T$ are measured. A strong simulation of a family of quopit circuits is a deterministic classical algorithm that takes as input a triple $\langle T, I, y_{|I|}\rangle$ and outputs the probability $p_T^{I} (y_{|I|})$. A weak simulation of a family of quopit circuits is a randomized classical algorithm that takes as input the pair $\langle T, I\rangle$ and outputs the values $y \in \mathbb F_p^{|I|}$ according to the probability distribution $p_T^{I}$. Stated informally, a strong simulation involves computing the marginal probabilities of the measurement outcomes of a quantum circuit, while a weak simulation involves just sampling from the same distribution as the quantum circuit. It was shown in \cite{terhal2004adptive} that strong simulation implies weak simulation.

Note that for the notions of strong and weak simulations, no restrictions are placed on the size of the subsets $I$ that are fed as inputs to the algorithm. To describe finer-grained notions of simulation that take into account the size of the subset of outputs to be simulated, we adopt the terminology introduced by Koh \cite{koh2015further}: Let $f(n)$ be a function of the number of registers $n$ in the circuit. Similar to strong simulation, a $\strong(f(n))$-simulation (which stands for strong-$f(n)$ simulation) of a family of quopit circuits is a deterministic classical algorithm that takes as input a triple $\langle T, I, y\rangle$ and outputs the probability $p_T^{I} (y)$, with the restriction that the subset $I = \{i_1,\ldots, i_{f(n)}\} \subseteq [n]$ and the tuple $y\in \mathbb F_p^{f(n)}$ must be of size $f(n)$. Similarly, a $\weak(f(n))$-simulation is defined the same way as weak simulation, except that the subset $I$ is required to be of size $f(n)$. The class that is relevant to our results is $\strong(n)$, which is obtained by taking $f(n) = n$. While an efficient strong simulation implies an efficient $\strong(n)$-simulation, the latter seems incomparable to efficient weak simulation \cite{koh2015further}. 

The original Gottesman-Knill Theorem states that there exists an efficient weak simulation of adaptive qubit Clifford circuits with computational basis inputs and computational basis measurements. Here, an efficient simulation is one that can be carried out in polynomial-time, and an adaptive circuit is one which has intermediate measurements whose outcomes may affect which operations we perform next.  By changing the ingredients of the Clifford circuit as well as considering different functions $f(n)$ in the definitions of $\strong(f(n))$ and $\weak(f(n))$ simulations, several variants of the Gottesman-Knill Theorem can be obtained \cite{koh2015further, Jozsa}.  Not all variations of ingredients and notions of simulation lead to efficient classical simulability though. In fact, the situation is much more delicate. As observed by Jozsa and Van den Nest \cite{Jozsa} and Koh \cite{koh2015further}, small changes to the ingredients of a Clifford circuit can lead to dramatic changes in the simulation complexity. For example, while nonadaptive qubit Clifford circuits with computational basis inputs and computational basis measurements are efficiently strongly simulable, adaptive qubit Clifford circuits with all the other ingredients kept the same are unlikely to be efficiently strongly simulable (an efficient strong simulation of such circuits can be shown to be $\# \P$-hard \cite{Jozsa}), even though it might seem that adaptability is a \textit{classical} resource. A table classifying which combinations of ingredients and which notions of simulations lead to efficient simulations is presented in \cite{koh2015further} (which generalizes a similar table found in \cite{Jozsa}).

Using the terminology described above, our main result may be stated as follows.
\begin{theorem}
\label{thm:mainThmx}
There exists an efficient $\strong(n)$-simulation of nonadaptive quopit Clifford circuits with computational basis inputs and computational basis measurements. In particular, the amplitude associated with starting with a computational basis state $\ket a$ as the input to the circuit, where $a \in \mathbb F_p$, and measuring the result $b \in \mathbb F_p$, is given by \eq{finalForm}, and the corresponding probability is given by \eq{expressionForProbabilities}.
\end{theorem}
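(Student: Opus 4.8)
The plan is to assemble the machinery of Sections \ref{sec:constructingPathSumExpressions} and \ref{sec:evaluatingsums} into a single polynomial-time procedure and to verify that it correctly computes $\langle b|U|a\rangle$ (whence both \eq{finalForm} and \eq{expressionForProbabilities} follow). Given a description of a quopit Clifford circuit implementing $U$ together with its computational-basis input $\ket a$, I would first bring the circuit to standard form by padding each register that does not already terminate in a Fourier gate with at most four additional Fourier gates, using $F^4 = \mathbb I$; this changes neither $U$ nor the asymptotic circuit size. Next I would run the labeling procedure of Section \ref{sec:constructingPathSumExpressions} to introduce the auxiliary variables $x_1,\dots,x_\alpha$ and read off the phase polynomial $S(x)$ from \eq{def:phase}. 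By Theorem \ref{mainTheorem} we then have $\langle b|U|a\rangle = p^{-(n+\alpha)/2}\sum_{x\in\mathbb F_p^\alpha}\chi(S(x))$, so the whole task reduces to evaluating this exponential sum in closed form.

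The second part of the plan is that evaluation. Since every input/output label in the labeled circuit is an affine function of the $x_i$, $a_i$ and $b_i$, the polynomial $S$ is quadratic, so I can write $S(x) = x^T\Theta x + \eta^T x + \zeta$ with $\Theta$ symmetric and, crucially, independent of $(a,b)$ (otherwise the total degree in the $x_i,a_i,b_i$ would exceed two), while $\eta$ and $\zeta$ depend affinely on $(a,b)$ and are computed directly from the labeled circuit. I would then invoke Theorem \ref{thm:diagonalization} to produce in polynomial time an $L \in \mathrm{GL}_\alpha(\mathbb F_p)$ with $\Lambda = L^T\Theta L$ diagonal, substitute $x = Ly$, $\mu = L^T\eta$, and factor the sum into one-variable Weil sums $\prod_{i=1}^\alpha \sum_{y_i}\chi(\lambda_i y_i^2 + \mu_i y_i)$ as in \eq{singlevariablesum}. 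Applying the three cases of Theorem \ref{thm:expsum}, partitioning $\{1,\dots,\alpha\}$ into $X,Y,Z$ according to whether $\lambda_i$ and $\mu_i$ vanish, and using multiplicativity of the Legendre symbol together with $|X|+|Y|+|Z| = \alpha$, yields exactly \eq{finalForm}; taking the modulus squared, and using that the Legendre symbol is nonzero on $X$, gives \eq{expressionForProbabilities}.

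Finally I would address the simulation claim. Since all $n$ quopits are measured, the only marginal a $\strong(n)$-simulation must output — the distribution over the full string $b\in\mathbb F_p^n$ — is $p_T^{[n]}(b) = |\langle b|U|a\rangle|^2$, which \eq{expressionForProbabilities} supplies; hence $\strong(n)$-simulability is immediate once the amplitude formula is established. It then remains to confirm polynomial running time: the labeling step and the extraction of $S$, $\Theta$, $\eta$, $\zeta$ are linear in the circuit size (and $n = O(N)$ by assumption), the change of variables and the final product over $X,Y,Z$ are likewise cheap, and the only nontrivial cost is the diagonalization, which is polynomial by Theorem \ref{thm:diagonalization} (Appendix \ref{sec:proofOfDiagonalization}). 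I expect the one genuinely delicate point in this assembly to be the bookkeeping — verifying that $\Theta$ is $(a,b)$-independent and that the standard-form padding is harmless — since everything else is a direct chaining of the three theorems already in hand, and the heavy lifting (the polynomial-time diagonalization over $\mathbb F_p$) has already been dispatched by Theorem \ref{thm:diagonalization}.
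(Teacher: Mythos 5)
Your proposal is correct and follows the same route as the paper: Theorem~\ref{thm:mainThmx} is not given a standalone proof there but is a summary of Sections~\ref{sec:constructingPathSumExpressions} and~\ref{sec:evaluatingsums}, and you have accurately reassembled the chain — standard-form padding, labeling and phase polynomial via Theorem~\ref{mainTheorem}, the quadratic decomposition $S(x)=x^T\Theta x+\eta^T x+\zeta$ with $\Theta$ independent of $(a,b)$, diagonalization via Theorem~\ref{thm:diagonalization}, the Weil-sum evaluation via Theorem~\ref{thm:expsum} yielding \eq{finalForm} and \eq{expressionForProbabilities}, and the running-time tally. Your observation that the $\strong(n)$ claim is immediate because all $n$ registers are measured matches the paper's intent exactly.
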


A few remarks are in order. First, we note that Theorem \ref{thm:mainThmx} corresponds to Case (iv) of Table 1 in \cite {koh2015further}, if the results in the paper are extended to quopit Clifford circuits. Secondly, it is unlikely that we can extend the theorem to include adaptive quopit Clifford circuits, since an analogous theorem to Theorem 2 of \cite{koh2015further}, which states that  $\strong(n)$-simulation of adaptive Clifford circuits is $\# \P$-hard, should hold true for quopit Clifford circuits. Thirdly, we leave open the question about whether the sum-over-paths approach is useful for proving efficient $\strong(1)$ or strong simulation of quopit Clifford circuits (we know that these exist though, by the Gottesman-Knill Theorem). Note that Theorem \ref{thm:mainThmx} does not immediately imply either efficient $\strong(1)$ or efficient strong simulation since, in general, computing a marginal probability from a joint probability requires summing an exponential number of terms and cannot be performed efficiently, unless there is some structure in the problem.

\section{Balancedness of quopit Clifford circuits}
\label{sec:balancedness}

Quopit Clifford gates have the property that their nonzero matrix elements relative to the computation basis have the same absolute value. A gate with this property (and the matrix representing it) is called \textit{balanced} \cite{Dawson, Penney}.

\begin{definition} \label{def:balanced}
 A gate $G$ acting on $n$ qudits represented by a unitary $U_G$ is balanced if there is a constant $c \in \mathbb R_{\geq0}$ and functions $f:\left(\zx_d\right)^n \times \left(\zx_d\right)^n \to \mathbb R$ and $g:\left(\zx_d\right)^n \times \left(\zx_d\right)^n \to \left(\zx_d\right)^n$ such that for all $a,b \in \left(\zx_d\right)^n$,
 \begin{equation}
  \bra b U_G \ket a = c \, e^{i f(a,b)} \delta_{0,g(a,b)}.
  \end{equation}
\end{definition}


As noted in \cite{Penney}, only if all gates in a circuit are balanced can one use the sum-over-paths technique to evaluate its functionality. We shall refer to the number $c$ as the \textit{weight} of the gate $G$. By convention, whenever the basis is not specified, it is assumed that the balanced property is defined with respect to the computational basis. Hence, the Fourier, phase and sum gates defined in \eq{def:cliffordgates} are balanced with weights $p^{-1/2}$, $1$ and $1$ respectively.

Unitary operations implemented by circuits consisting of balanced gates are not balanced in general. For example, consider a circuit consisting of the Hadamard gate $H$, given by \eqref{qubitcircuit}, and the gate $V$, given by
$$ V = \frac 1{\sqrt 2} \left(\begin{matrix}
e^{-i \theta} & -e^{i \theta} \\ e^{-i \theta} & e^{i \theta}
\end{matrix} \right). $$
It is straightforward to check that both $H$ and $V$ are balanced and unitary. The product $VH$, however, is
$$ VH =  \left(\begin{matrix}
-i \sin(\theta) & \cos(\theta) \\ \cos(\theta) & -i \sin(\theta)
\end{matrix} \right), $$
which is not balanced in general. For example, when $\theta = \pi/6$, the entries of $VH$ have absolute values $1/2$ and $\sqrt{3}/2$.

For quopit Clifford circuits, however, unitary operations implemented by quopit Clifford gates are always balanced. This is a direct consequence of \eq{expressionForProbabilities}. To see this, recall that $\Theta$ (defined in Eq.~\eqref{expforAction}) is independent of $(a,b)$. Hence, $r = \mathrm{rank} (\Theta)$ is independent of $(a,b)$, which implies that the nonzero terms of $|\langle b|U|a \rangle|$, which are equal to $p^{-(n+r-\alpha)/2}$, are independent of $(a,b)$. This result is summarized in the following theorem:

\begin{theorem}
Let $U$ be a unitary operation on $n$ quopits that is implemented by a standard-form Clifford circuit $C$ with $\alpha$ non-terminal Fourier gates, and phase polynomial $S(x)$. Let $r$ be the rank of the coefficient matrix of the quadratic form corresponding to the degree-2 terms in $S(x)$. Then $U$ is a balanced matrix with weight $p^{- (n+r-\alpha)/2}$.
\end{theorem}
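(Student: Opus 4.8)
The plan is to read the balancedness of $U$ directly off the closed-form amplitude formula \eq{finalForm}. That formula was obtained by chaining Theorem \ref{mainTheorem}, the diagonalization of Theorem \ref{thm:diagonalization}, and the Weil-sum evaluation of Theorem \ref{thm:expsum}, so it already applies verbatim to the circuits in the hypothesis. I would therefore start from
\begin{equation*}
\langle b|U|a\rangle \;=\; p^{-(n+r-\alpha)/2}\,\delta_{0,|Z|}\; i^{r\varepsilon(p)}\Bigl(\tfrac{\prod_{i\in X}\lambda_i}{p}\Bigr)\,\chi\!\Bigl(\zeta - 4^{-1}\!\sum_{i\in X}\lambda_i^{-1}\mu_i^2\Bigr)
\end{equation*}
and match each factor against the template $c\,e^{if(a,b)}\delta_{0,g(a,b)}$ of Definition \ref{def:balanced} (taken with $d=p$).

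The key input is the remark made just after \eq{expforAction}: the quadratic coefficient matrix $\Theta$ is independent of $(a,b)$. Hence so are its rank $r$, the index set $X=\{i:\lambda_i\neq0\}$, the Legendre symbol $\bigl(\prod_{i\in X}\lambda_i/p\bigr)\in\{-1,1\}$ (which, note, never vanishes, since $\lambda_i\neq0$ for $i\in X$), and the Gauss-sum phase $i^{r\varepsilon(p)}$. Thus $p^{-(n+r-\alpha)/2}$ is a genuine constant, which I take to be the weight $c$; the fixed unit complex number $i^{r\varepsilon(p)}\bigl(\prod_{i\in X}\lambda_i/p\bigr)=e^{i\theta_0}$ contributes only the constant phase offset $\theta_0\in\{0,\tfrac\pi2,\pi,\tfrac{3\pi}2\}$; and the remaining factor $\chi(\,\cdot\,)=e^{2\pi i(\,\cdot\,)/p}$ is a pure phase whose argument depends on $(a,b)$ only through the $(a,b)$-dependent quantities $\zeta$ and $\mu=L^{T}\eta$ of \eq{expforAction}. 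So I would set $f(a,b)=\theta_0+\tfrac{2\pi}{p}\bigl(\zeta-4^{-1}\sum_{i\in X}\lambda_i^{-1}\mu_i^2\bigr)$, which absorbs all of the phase; that $f$ takes real values is immediate once elements of $\mathbb F_p$ are lifted to integer representatives.

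The only step needing genuine care is putting $\delta_{0,|Z|}$ into the prescribed form $\delta_{0,g(a,b)}$ with $g$ valued in $\mathbb F_p^{n}$. Now $|Z|=0$ precisely when $\mu_i=0$ for every $i$ with $\lambda_i=0$, i.e.\ for all $i\in Y\cup Z$; since $S$ has joint degree at most $2$ in $x$ and $(a,b)$ and is linear in $x$, the vector $\mu=L^{T}\eta$ is affine-linear in $(a,b)$, so this is a system of $|Y|+|Z|=\alpha-r$ linear conditions, whose left-hand sides I collect into a map $h:\mathbb F_p^{n}\times\mathbb F_p^{n}\to\mathbb F_p^{\alpha-r}$ with $h(a,b)=0\iff|Z|=0$. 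To match the required signature I would invoke \eq{expressionForProbabilities}: every nonzero entry of a fixed column of $U$ has squared modulus $p^{-(n+r-\alpha)}$, and unitarity of $U$ forces these to sum to $1$, so the number of nonzero entries in that column is the positive integer $p^{\,n+r-\alpha}$; hence $n+r-\alpha\ge0$, i.e.\ $\alpha-r\le n$. Padding $h$ with $n-(\alpha-r)\ge0$ zero components yields $g:\mathbb F_p^{n}\times\mathbb F_p^{n}\to\mathbb F_p^{n}$ with $g(a,b)=0\iff|Z|=0$. Assembling the three observations gives $\langle b|U|a\rangle=c\,e^{if(a,b)}\,\delta_{0,g(a,b)}$ with $c=p^{-(n+r-\alpha)/2}$, which is exactly the asserted balancedness and identifies the weight. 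I do not expect any real obstacle here: the substance is already contained in \eq{finalForm}, and the lone subtlety is the bookkeeping just described to reconcile the combinatorial indicator $\delta_{0,|Z|}$ with the vector-valued $g$ demanded by Definition \ref{def:balanced}.
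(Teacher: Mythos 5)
Your proposal is correct and follows essentially the same route as the paper: both read balancedness off the closed-form amplitude derived via the sum-over-paths chain (Theorem~\ref{mainTheorem}, Theorem~\ref{thm:diagonalization}, Theorem~\ref{thm:expsum}), with the decisive observation being that $\Theta$, and hence $r$, is independent of $(a,b)$, so all nonzero matrix-element magnitudes equal $p^{-(n+r-\alpha)/2}$. The only difference is one of thoroughness rather than of method --- the paper cites \eq{expressionForProbabilities} and stops, whereas you additionally verify the literal template of Definition~\ref{def:balanced} by packaging $\delta_{0,|Z|}$ as $\delta_{0,g(a,b)}$ with an affine-linear $g$ and using unitarity to establish $\alpha-r\le n$ so that $g$ can be padded into $\mathbb F_p^{n}$; this is sound bookkeeping that the paper elides.
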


\section{Concluding remarks}
\label{sec:conclusion}

The sum-over-paths approach has proved useful in providing an efficient algorithm for computing the amplitudes of circuits composed of quopit Clifford gates. The approach, however, is by no means applicable to only quopit Clifford circuits -- our proof of Theorem \ref{thm:mainThmx} can easily be extended to imply a stronger result. To see this, note that the arguments presented in the proof depended on only the following two properties of the set $\mathcal S$ of quopit Clifford gates:

\begin{enumerate}
 \item Each gate $G\in \mathcal S$ is balanced with the function $g(a,b)$ being linear and the function $f(a,b)$ having the form 
 \begin{equation}
 f(a,b) = \frac{2 \pi}{p} S_G(a,b),
 \end{equation}
 where $S_G$ is a polynomial of degree at most 2 in the components of $a$ and $b$ having coefficients in $\mathbb F_p$. 
\item There is a gate $G \in \mathcal S$ with finite order (i.e. there exists a natural number $k$ such that $(U_G)^k = \mathbb I$) that satisfies $g(a,b) = 0$ for all $a,b$.
 \end{enumerate}

For any class of circuits composed of gates satisfying these two properties, the transition amplitudes can be computed by a sum-over-paths expression of the form 
\begin{equation}
 \mathcal C \sum_{x \in \mathbb F_p^\beta} \chi(S(x)),
\end{equation}
where $\mathcal C \in \mathbb R_{>0}$ and $\beta \in \mathbb N$ are constants determined by the particular circuit, and $S(x)$ is a degree-$2$ polynomial in $\beta$ variables having coefficients in $\mathbb F_p$. 
Theorem \ref{thm:expsum} then gives us an efficient $\strong(n)$-simulation for such a class of circuits. In particular this includes the linear algebraic quantum circuits of Bacon \etal \cite{AlgCircuits}.

%
%

We conclude with a discussion about circuits with gate sets that do not satisfy the above two properties. Two classes of such circuits are of interest here. The first class of circuits are those composed of gates from universal gate sets. For example, consider a circuit composed of Toffoli and Hadamard gates. In general, the phase polynomial corresponding to such circuits is of degree 3, and hence does not satisfy Property 1 -- the sum-over-paths technique that leads to efficient simulation presented in this paper does not apply to such circuits. Based on conjectures in computational complexity theory, this result is expected since it is believed that an efficient classical simulation of (universal) quantum circuits is not possible. The second class of circuits are efficiently simulable circuits like the \textit{qubit Clifford circuits}, which are circuits composed of the Hadamard gate $H$, the phase gate $P$, and CNOT gate, defined as follows:
\begin{eqnarray}
\label{qubitcircuit}
H &=& \frac 1{\sqrt{2}} \sum_{s,t \in \mathbb F_2} (-1)^{st} \ket s \bra t, \nonumber \\
P &=& \sum_{t\in\mathbb F_2} i^t \ket t \bra t, \nonumber \\
\textrm{CNOT} &=& \sum_{s,t\in \mathbb F_2} \ket{s,s+t}\bra{s,t}.
\end{eqnarray}
From the above expression for the phase gate $P$, it is evident that we cannot find a polynomial $S_P(s,t)$ over $\mathbb F_2$ such that the  matrix elements $\langle s |P |t\rangle$ that are nonzero are of the form $\chi(S_P(s,t))$. As discussed in \cite{Penney}, one needs to define the phase polynomial to be over $\mathbb Z_4$ instead. We note also that for qubit Clifford circuits, there are other parts of the proof which would fail, for example, Theorem \ref{thm:diagonalization}, which works only for fields that are not of characteristic 2. For a treatment of evaluating amplitudes of qubit Clifford circuits using exponential sums, we refer the reader to \cite{bravyi2016improved}.

\section*{Acknowledgements} 
DEK thanks Siong Thye Goh for helpful discussions. DEK is supported by the National Science Scholarship from the Agency for Science, Technology and Research (A*STAR). MDP is supported by NSERC through the Doctoral Postgraduate Scholarship and by Corpus Christi College, Oxford. Part of this research was conducted during the 2015 Convergence Conference at the Perimeter Institute for Theoretical Physics.  Research at the Perimeter Institute for Theoretical Physics is supported in part by the Government of Canada through NSERC and by the Province of Ontario through MRI.

\bibliographystyle{unsrt}
\bibliography{Bib}

\begin{thebibliography}{10}

\bibitem{Arora}
Sanjeev Arora and Boaz Barak.
\newblock {\em Computational complexity: a modern approach}.
\newblock Cambridge University Press, 2009.

\bibitem{GottesmanThesis}
Daniel Gottesman.
\newblock {\em Stabilizer Codes and Quantum Error Correction}.
\newblock PhD thesis, California Institute of Technology, 1997.

\bibitem{Raussendorf}
Robert Raussendorf, Daniel~E. Browne, and Hans~J. Briegel.
\newblock Measurement-based quantum computation on cluster states.
\newblock {\em Phys. Rev. A}, 68:022312, Aug 2003.

\bibitem{RaussendorfBriegel}
Robert Raussendorf and Hans~J. Briegel.
\newblock A one-way quantum computer.
\newblock {\em Phys. Rev. Lett.}, 86:5188--5191, May 2001.

\bibitem{AaronsonGottesman}
Scott Aaronson and Daniel Gottesman.
\newblock Improved simulation of stabilizer circuits.
\newblock {\em Physical Review A}, 70(5):052328, 2004.

\bibitem{gottesman1998heisenberg}
Daniel Gottesman.
\newblock The {H}eisenberg representation of quantum computers.
\newblock {\em arXiv preprint quant-ph/9807006}, 1998.

\bibitem{Dehaene}
Jeroen Dehaene and Bart De~Moor.
\newblock Clifford group, stabilizer states, and linear and quadratic
  operations over {GF}(2).
\newblock {\em Phys. Rev. A}, 68:042318, Oct 2003.

\bibitem{VandenNest}
M.~Van den Nest.
\newblock Classical simulation of quantum computation, the {G}ottesman-{K}nill
  theorem, and slightly beyond.
\newblock {\em Quantum Information \& Computation}, 10(3-4):0258--0271, 2010.

\bibitem{Jozsa}
Richard Jozsa and Maarten Van Den~Nest.
\newblock Classical simulation complexity of extended {C}lifford circuits.
\newblock {\em Quantum Info. Comput.}, 14:633--648, May 2014.

\bibitem{nest2012efficient}
Maarten Van~den Nest.
\newblock Efficient classical simulations of quantum {F}ourier transforms and
  normalizer circuits over {A}belian groups.
\newblock {\em arXiv preprint arXiv:1201.4867}, 2012.

\bibitem{FeynmanHibbs}
Richard~P. Feynman and A.~R. Hibbs.
\newblock {\em Quantum Mechanics and Path Integrals}.
\newblock McGraw-Hill Companies, 1965.

\bibitem{Dawson}
Christopher~M. Dawson, Henry~L. Haselgrove, Andrew~P. Hines, Duncan Mortimer,
  Michael~A. Nielsen, and Tobias~J. Osborne.
\newblock Quantum computing and polynomial equations over the finite field
  $\mathbb{Z}_2$.
\newblock {\em Quantum Information \& Computation}, 5(2):102--112, 2005.

\bibitem{Penney}
Mark~D Penney, Dax~Enshan Koh, and Robert~W Spekkens.
\newblock Quantum circuit dynamics via path integrals: Is there a classical
  action for discrete-time paths?
\newblock {\em New Journal of Physics}, 19(7):073006, 2017.

\bibitem{Emerson}
J.~Emerson, V.~Veitch, M.~Howard, D.~Gottesman, A.~Hamed, C.~Ferrie, and
  D.~Gross.
\newblock Negative quasi-probability, contextuality, quantum magic and the
  power of quantum computation, 2012.
\newblock Slides of a talk given at UBC, July 2013.

\bibitem{Weil}
Andr{\'e} Weil.
\newblock Numbers of solutions of equations in finite fields.
\newblock {\em Bull. Amer. Math. Soc.}, 55:497--508, 1949.

\bibitem{koh2015further}
Dax~Enshan Koh.
\newblock Further extensions of {C}lifford circuits and their classical
  simulation complexities.
\newblock {\em Quantum Information \& Computation}, 17(3\&4):0262--0282, 2017.

\bibitem{Adleman}
Leonard~M Adleman, Jonathan Demarrais, and Ming-Deh~A Huang.
\newblock Quantum computability.
\newblock {\em SIAM Journal on Computing}, 26(5):1524--1540, 1997.

\bibitem{AlgCircuits}
D.~Bacon, W.~Van~Dam, and A.~Russell.
\newblock Analyzing algebraic quantum circuits using exponential sums.
\newblock {\em unpublished. Available at
  http://www.cs.ucsb.edu/~vandam/publications.html}, 2008.

\bibitem{Montanaro}
Ashley Montanaro.
\newblock Quantum circuits and low-degree polynomials over {$\mathbb F_2$}.
\newblock {\em arXiv preprint arXiv:1607.08473}, 2016.

\bibitem{gerdt2006algorithm}
Vladimir~P Gerdt and Vasily~M Severyanov.
\newblock An algorithm for constructing polynomial systems whose solution space
  characterizes quantum circuits.
\newblock In {\em Quantum Informatics 2005}, pages 626406--626406.
  International Society for Optics and Photonics, 2006.

\bibitem{Lidl}
Rudolf Lidl and Harald Niederreiter.
\newblock {\em Finite Fields}.
\newblock Cambridge University Press, 1997.

\bibitem{nielsen2010quantum}
Michael~A Nielsen and Isaac~L Chuang.
\newblock {\em Quantum computation and quantum information}.
\newblock Cambridge university press, 2010.

\bibitem{aaronson2004improved}
Scott Aaronson and Daniel Gottesman.
\newblock Improved simulation of stabilizer circuits.
\newblock {\em Physical Review A}, 70(5):052328, 2004.

\bibitem{van2010classical}
Maarten Van~den Nest.
\newblock Classical simulation of quantum computation, the {G}ottesman-{K}nill
  theorem, and slightly beyond.
\newblock {\em Quantum Information \& Computation}, 10(3-4):258--271, 2010.

\bibitem{terhal2004adptive}
Barbara~M Terhal and David~P DiVincenzo.
\newblock Adaptive quantum computation, constant depth quantum circuits and
  {A}rthur-{M}erlin games.
\newblock {\em Quantum Information \& Computation}, 4(2):134--145, 2004.

\bibitem{bravyi2016improved}
Sergey Bravyi and David Gosset.
\newblock Improved classical simulation of quantum circuits dominated by
  {C}lifford gates.
\newblock {\em Phys. Rev. Lett.}, 116(25):250501, 2016.

\end{thebibliography}
\newpage
\onecolumngrid
\appendix
\section{Proof of Theorem \ref{thm:diagonalization}}
\label{sec:proofOfDiagonalization}

\def\aligning#1#2{\parbox[t]{\dimexpr\linewidth-#1\algorithmicindent/16}{ \raggedright #2 \strut}}
\algnewcommand\algorithmicinput{\textbf{Input:}}
\algnewcommand\INPUT{\item[\algorithmicinput]}
\algnewcommand\algorithmicoutput{\textbf{Output:}}
\algnewcommand\OUTPUT{\item[\algorithmicoutput]}
\algnewcommand\algorithmicxprocedure{\textbf{Procedure:}}
\algnewcommand\PROCEDURE{\item[\algorithmicxprocedure]}

We shall describe a polynomial-time algorithm $T$ that, when given a symmetric matrix $\Theta \in M_\alpha(\mathbb F_p)$, outputs an invertible matrix $L \in \mathrm{GL}_\alpha(\mathbb F_p)$ such that $L^T \Theta L$ is diagonal. As we assumed in the main text, $p$ denotes an odd prime. The proof is essentialy an algorithmic implementation of Propositions 6.20 and 6.21 of \cite{Lidl}. 

We use the following notation in the proof: The matrix direct sum is denoted by $A \oplus B = \mathrm{diag}(A,B)$. The $n \times n$ identity matrix is denoted by $\mathbb I_n$. The Kronecker delta is denoted by $\delta_{ij}$. The components of a matrix $A\in  M_n(\mathbb F_p)$ are denoted by $A_{ij}$, with the indices taking values $i,j=1,\ldots,n$. Likewise, the components of a vector $v \in \mathbb F_p^n$ are denoted by $v_i$, with $i=1,\ldots,n$. The inverse $a^{-1}$ is defined to be the multiplicative inverse modulo $p$ of $a\in\mathbb F_p$, i.e.\ the unique $b\in \mathbb F_p$ for which $ab \equiv 1 \, (\mathrm{mod} \, p)$.

We first describe a subroutine, termed Algorithm 1, that we will need to call repeatedly. 

\begin{algorithm}
\caption{Subroutine for Algorithm $T$}
\label{alg:T1}
\begin{algorithmic}[1]
\INPUT a symmetric matrix $A \in M_n(\mathbb F_p)$, where $n\in \mathbb Z^+$.
\OUTPUT a 3-tuple $(P,a,B)$, where $P \in \mathrm{GL}_n(\mathbb F_p)$, $a\in \mathbb F_p$ and $B\in M_{n-1}(\mathbb F_p)$ is a symmetric matrix, such that
\begin{equation}
\label{outputreq}
P^T A P = a \oplus B.
\end{equation}
\If {$A=0$}
\State output $(\mathbb I_n,0,0)$.
\Else { \aligning{5}{let $a \in \mathbb F_p \backslash \{0\}$ and $c \in \mathbb F_p^n \backslash \{0\}$ be defined as follows:}
}
\State Check if there exists $i$ such that $A_{ii}\neq 0$.
\If {YES}
    \State Let $I = \min\{i|A_{ii}\neq 0\}$. 
    \State Set $a=A_{II}$, $c_j = \delta_{Ij}$ for $j=1,\ldots,n$.
\Else { NO}
    \State \aligning{8}{Let $(I,J) = \min\{(i,j)|A_{ij}\neq 0\}$. (where the minimum is taken with respect to some lexicographic ordering) }
    \State Set $a=2A_{IJ}$, $c_k = \delta_{Ik} + \delta_{Jk}$ for $k=1,\ldots,n$.
\EndIf
\State Choose $M$ for which $c_M \neq 0$.
\State Construct the nonsingular matrix $C \in \mathrm{GL}_n(\mathbb F_p)$ defined by: 
\begin{equation} \label{definingC}
C_{ij} = \begin{cases} c_i & j=1 \\ \delta_{i+1,j} & j \neq 1, i<M \\ 0 & j \neq 1, i=M  \\ \delta_{i,j} & j \neq 1, i>M. \end{cases}
\end{equation}
\quad \, for $i,j=1,\ldots,n$.
\State Compute 
\begin{equation} \label{definingb}
b_l = \sum_{ij} c_i A_{ij} C_{jl}, 
\end{equation}
\quad \, for $l=1,\ldots,n$.
\State Define 
\begin{equation} \label{definingg}
g(y_2, \ldots, y_n) = \sum_{k>1, l>1} \left(\sum_{ij} A_{ij} C_{ik} C_{jl}\right) y_k y_l - a^{-1} \left(\sum_{i>1} b_i y_i\right)^2.
\end{equation}
\State Construct the matrix $D \in \mathrm{GL}_n(\mathbb F_p)$ defined by 
\begin{equation} \label{definingD}
D_{ij} = \begin{cases} -b_j a^{-1} & i=1, j\neq 1 \\ \delta_{i,j} & \textrm{otherwise.} \end{cases}
\end{equation}
\State Compute the coefficient matrix $B \in M_{n-1}(\mathbb F_p)$ of the quadratic form $g$ using 
\begin{equation}
B_{ij} = 2^{-1}\left[g(e_i + e_j) - g(e_i) - g(e_j)\right],
\end{equation}
\quad \, where $e_i$ is the $i$th unit vector.
\State Compute $P=CD \in \mathrm{GL}_n(\mathbb F_p)$.
\State Output $(P,a,B)$.
\EndIf
\end{algorithmic}
\end{algorithm}

\noindent \textbf{Proof of correctness:} We shall prove that Algorithm \ref{alg:T1} works as described. If $A=0$, then $(P,a,B)=(I,0,0)$, which satisfies \eq{outputreq}, as required. Hence, for the rest of the proof, we shall assume that $A \neq 0$.

We first claim that $c^T A c = a$. Indeed, in the YES case in Line 5, $c^T A c = \sum_{kl} A_{kl} \delta_{ik} \delta_{il} = A_{ii} = a$, and in the NO case in Line 8, we have $A_{ii}=0$ for all $i$. Hence, $c^T A c = \sum_{kl} A_{kl} (\delta_{ik} + \delta_{jk})(\delta_{il} + \delta_{jl}) = A_{ij} + A_{ji} = 2A_{ij} = a$. Note that in both cases, $a\neq 0$. Hence $a^{-1}$ exists.

Next, consider the quadratic form $f$ corresponding to the matrix $A$:
\begin{eqnarray}
f(t_1, \ldots, t_n) = \sum_{ij} A_{ij} t_i t_j = t^T A t.
\end{eqnarray} 

Define 
\begin{eqnarray} \label{definingfprime}
f'(y_1,\ldots,y_n) = f(Cy) = \sum_{ij} A_{ij} (Cy)_i (Cy)_j.
\end{eqnarray} 

By expanding \eq{definingfprime}, and using \eq{definingC}, \eq{definingb} and \eq{definingg}, we obtain
\begin{eqnarray}
f'(y_1,\ldots,y_n) = a \left(y_1 + a^{-1} \sum_{l>1} b_l y_l \right)^2 + g(y_2,\ldots,y_n). 
\end{eqnarray}

Consider the matrix $D$ defined in \eq{definingD}. It is easy to see that its inverse has components given by
\begin{equation} \label{definingDinverse}
D_{ij}^{-1} = \begin{cases} b_j a^{-1} & i=1, j\neq 1 \\ \delta_{i,j} & \textrm{otherwise.} \end{cases}
\end{equation}

Let $x = D^{-1}y$. Then $x_1 = y_1 + a^{-1} \sum_{l>1} b_l y_l$ and  $x_i = y_i$, for all $i>1$. Hence,
\begin{equation} \label{expressionforfprime}
f'(y_1,\ldots,y_n) = a x_1^2 + g(x_2,\ldots, x_n).
\end{equation}

Now the LHS of \eq{expressionforfprime} is equal to
\begin{equation}
\label{LHSeq}
f'(y) = f'(Dx) = f(CDx) = f(Px) = x^T P^T A P x,
\end{equation}
and the RHS of  \eq{expressionforfprime} is equal to
\begin{eqnarray} \label{RHSeq}
a x_1^2 + g(x_2,\ldots, x_n) = ax_1^2 + \sum_{i,j=1}^n B_{ij} x_i x_j = x^T(a \oplus B) x,
\end{eqnarray}
where we used the fact that $B$ is the coefficient matrix of $g$.

Equating \eq{LHSeq} and \eq{RHSeq} then gives $ x^T P^T A P x = x^T (a \oplus B) x$. Since this holds for all $x$, we obtain
\begin{equation}
P^T A P = a \oplus B.
\end{equation}
\qed

We are now ready to describe the algorithm $T$:

\begin{algorithm}[h!]
\caption{Algorithm $T$ for matrix diagonalization}
\label{alg:T}
\begin{algorithmic}[1]
\INPUT a symmetric matrix $\Theta \in M_\alpha(\mathbb F_p)$, where $\alpha \geq 1$.
\OUTPUT an invertible matrix $L \in \mathrm{GL}_\alpha(\mathbb F_p)$ such that $L^T \Theta L$ is diagonal.
\If {$\alpha=1$}
\State Output $L=\mathbb I_1 \in M_1(\mathbb F_p)$.
\Else { Set $\Theta_\alpha = \Theta$.
}
\For {$k = \alpha, \alpha -1 ,\ldots, 2$}
\State Run Algorithm \ref{alg:T1} on $\Theta_k$ to get output $(P_k, a_k, \Theta_{k-1})$.
\EndFor
\For {$s = 1,\ldots,\alpha$}
\State Set $\tilde{P}_s = \mathbb I_{\alpha-s} \oplus P_s$.
\EndFor
\State Output $L = \tilde{P}_\alpha \tilde{P}_{\alpha-1} \ldots \tilde{P}_2$.
\EndIf
\end{algorithmic}
\end{algorithm}

\noindent \textbf{Proof of correctness:} If $\alpha=1$, then $\Theta$ is already diagonal. Hence, setting $L=\mathbb I$ to be the identity matrix gives the diagonal matrix $L^T \Theta L = \Theta$. Otherwise, we note that for the FOR loop with variable $k$ in Line 4 of Algorithm \ref{alg:T}, the output of Algorithm \ref{alg:T1} on $\Theta_k$ is the 3-tuple $(P_k, a_k, \Theta_{k-1})$ that satisfies
\begin{align} \label{actionOfT1}
P_k^T \Theta_k P_k = a_k \oplus \Theta_{k-1}.
\end{align}
Now it is straightforward to show by induction that
\begin{equation}
\tilde{P}_{\alpha-s}^T \ldots \tilde{P}_{\alpha-1}^T \tilde{P}_{\alpha}^T \Theta_\alpha \tilde{P}_{\alpha} \tilde{P}_{\alpha-1} \ldots \tilde{P}_{\alpha-s} = a_\alpha \oplus a_{\alpha-1} \oplus \ldots \oplus a_{\alpha-s} \oplus \Theta_{\alpha-s-1},
\end{equation}
for all $s = 0, 1, \ldots, \alpha-2$.

Hence, by using $s=\alpha-2$ and $\Theta_\alpha = \Theta$, we get
\begin{equation}
\tilde{P}_2^T \ldots \tilde{P}_{\alpha-1}^T \tilde{P}_{\alpha}^T \Theta \tilde{P}_{\alpha} \tilde{P}_{\alpha-1} \ldots \tilde{P}_2 = a_\alpha \oplus a_{\alpha-1} \oplus \ldots \oplus a_2 \oplus \Theta_1.
\end{equation}
Since each $\tilde{P}_s$ is invertible, their product $L = \tilde{P}_\alpha \tilde{P}_{\alpha-1} \ldots \tilde{P}_2$ is also invertible. Therefore, writing  $a_1 = \Theta_1 \in \mathbb F_p$, we get that
\begin{equation}
L^T \Theta L = \mathrm{diag}(a_\alpha, \ldots, a_2, a_1)
\end{equation}
is a diagonal matrix.

It is straightforward to see that Algorithm \ref{alg:T} runs in polynomial time in the size of the matrix $\alpha$.
\end{document}